\renewcommand{\cite}{\citep}
\newtheorem{theorem}{Theorem}
\newtheorem{lemma}{Lemma}
\newtheorem{definition}{Definition}
\renewcommand\bibsection%
\newcommand{\ignore}[1]{}
\DeclareMathAlphabet{\mathcal}{OMS}{cmsy}{m}{n}      
\title{\LARGE \bf Ignoring Extreme Opinions in Complex Networks: \\  The Impact of Heterogeneous Thresholds}
\author{Shreyas~Sundaram%
\thanks{The author is with the School of Electrical and Computer Engineering at Purdue University. E-mail: {\tt sundara2@purdue.edu}.}%
}
\begin{document}

\date{}
\maketitle
\thispagestyle{empty}
\pagestyle{empty}

\begin{abstract}
We consider a class of opinion dynamics on networks where at each time-step, each node in the network disregards the opinions of a certain number of its most extreme neighbors  and updates its own opinion as a weighted average of the remaining opinions.  When all nodes disregard the same number of extreme neighbors, previous work has shown that consensus will be reached if and only if the network satisfies certain topological properties.  In this paper, we consider the implications of allowing each node to have a personal threshold for the number of extreme neighbors to ignore.  We provide graph conditions under which consensus is guaranteed for such dynamics.  We then study random networks where each node's threshold is drawn from a certain distribution, and provide conditions on that distribution, together with conditions on the edge formation probability, that guarantee that consensus will be reached asymptotically almost surely.  
\end{abstract}

\section{Introduction}
\label{sec:intro}

The study of how opinions, fads, and ideas spread through populations has received significant attention over the past several decades, spanning diverse disciplines including sociology, mathematics, physics, computer science and engineering.  This body of work has shown that complex global phenomena can arise as a result of relatively simple interaction rules for the individuals in the population.  In particular, there is a tight coupling between the {\it structure} of the interactions (i.e., who talks to whom) and the {\it dynamics} of these interactions (i.e., what happens when they talk).

An early investigation of {\it opinion dynamics} in networks was initiated by DeGroot \cite{DeGroot74}, who proposed that individuals repeatedly update their personal (real-valued) opinions as a weighted average of their neighbors' opinions.  Such dynamics lead to consensus under mild assumptions on the underlying network topology, and there has been substantial effort devoted to extending such averaging rules to settings involving time-varying graph topologies, higher-order dynamics, and stubborn or malicious individuals \cite{Jadbabaie03, Moreau05, Ren05, SundaramHadjicostis11,acemoglu2013opinion,friedkin2015problem,pirani2016smallest}.  A modification of the classical DeGroot model is the {\it bounded confidence} model of Hegselmann and Krause (and separately, Weisbuch and Deffuant), where an individual only averages the opinions of those neighbors whose opinions are close (according to some metric) to their own \cite{HK2002,weisbuch2002meet,amblard2004role,etesami2013termination,Ts2009}.   In such cases, the opinions in the network settle down to a stratified set of values (i.e., not necessarily in consensus), depending on the initial spread of opinions.  

In addition to the above studies of real-valued opinions, researchers have also investigated the diffusion of {\it binary} valued information;  this models, for instance, the adoption of a certain innovation, or deciding whether to participate in an activity (such as a riot) \cite{Granovetter1978,schelling2006micromotives, Morris00, watts2002simple,sood2005voter,masuda2015opinion,yildiz2013binary}.  The dynamics in such cases are often manifested as {\it threshold-based} rules, where an individual adopts the action if a certain number or fraction of their neighbors have done so.  In particular, {\it heterogeneous thresholds} play an important role in such dynamics, as they lead to cascades whereby individuals with low thresholds initiate the adoption process, and individuals with increasingly higher thresholds subsequently join the cascade \cite{Granovetter1978, watts2002simple, gladwellTipping, Goldenberg01}.  

In this paper, we study a generalization of DeGroot opinion dynamics by introducing a threshold-based filtering rule into the averaging dynamics.  Specifically, instead of each node averaging {\it all} of its neighbors opinions (as in DeGroot), we consider the scenario where each individual disregards the most extreme opinions in its neighborhood and averages only the opinions of its moderate neighbors.  This rule differs from the bounded confidence models of Hegselmann and Krause in that each individual discards only a certain number of most extreme values in its neighborhood under our dynamics, as opposed to any number of values that differ significantly from its own opinion.  This type of filtering rule has been recently studied in the context of resilient consensus dynamics (where some nodes can be faulty or malicious), and all non-adversarial nodes use the {\it same} threshold (i.e., they all disregard the same number of extreme neighbors) \cite{leblanc13,Nitin12}.  These previous works have established graph-theoretic conditions that are required for such rules to guarantee consensus among the non-adversarial nodes, regardless of the actions of the adversaries.  In this paper we do not consider adversarial behavior, but instead study the impact of {\it heterogeneity} in the filtering thresholds (motivated by the study of such heterogeneous thresholds in binary diffusion dynamics, as highlighted above).  Our contributions are as follows.  We first extend the graph-theoretic characterizations from \cite{leblanc13} to the case of heterogeneous thresholds and provide necessary and sufficient conditions on the network topology (in terms of the thresholds) for consensus to be guaranteed regardless of the initial opinions.  We then study such dynamics in random graph models for complex networks, where each node's opinion filtering threshold is chosen independently and randomly from a certain distribution.  We characterize properties of this distribution such that with high probability, the resulting network guarantees consensus;  we study Erd\H os-R\'enyi random graphs (where each edge is placed independently with the same probability), random interdependent networks (consisting of various subnetworks or communities with arbitrary topologies), and random graphs with heterogeneous edge probabilities.

\subsection*{Notation and Definitions}

A graph (or network) is denoted by $\mathcal{G}=\{\mathcal{V},\mathcal{E}\}$, where $\mathcal{V}=\{v_1, \ldots, v_n\}$ is a set of nodes (or vertices or individuals)  and $\mathcal{E} \subseteq \mathcal{V} \times \mathcal{V}$ is a set of edges.  An edge $(v_j, v_i)$ indicates that node $v_i$ can receive information from node $v_j$.  The set of {\it neighbors} of node $v_i$ is defined as $\mathcal{N}_i=\{v_j\in\mathcal{V}  \mid (v_j, v_i) \in \mathcal{E}\}$.  Correspondingly, the {\it degree} of node $v_i$ is given  by $d_i =| \mathcal{N}_i|$, and the {\it minimum degree} of the network is $\min_{v_i\in\mathcal{V}}d_i$.  A {\it path} from node $v_j$ to $v_i$ is a sequence of nodes $v_{i_1}, v_{i_2}, \ldots, v_{i_k}$ such that $v_{i_1} = v_j$, $v_{i_k} = v_i$ and $(v_{i_l},v_{i_{l+1}}) \in \mathcal{E}$ for $l = 1, 2, \ldots, k-1$.  The graph is said to be {\it strongly connected} if there is a path from every node to every other node.  The {\it connectivity} of a graph is the smallest number of nodes that have to be removed in order to cause the remaining network to not be strongly connected.   

In our derivations, we use $\mathbb{Z}$ to denote the set of integers, $\mathbb{N}$ to denote the set of nonnegative integers, and $\mathbb{R}$ to denote the set of real numbers.  We add subscripts to these quantities to denote restrictions of the sets to appropriate values.  For two functions $f: \mathbb{N} \rightarrow \mathbb{R}$ and $g: \mathbb{N} \rightarrow \mathbb{R}$, we say  $f(n)=O(g(n))$ if there exist constants $c>0$ and $n_0$ such that $|f(n)| \le c|g(n)|$ for all $n \ge n_0$.  We say $f(n)=o(g(n))$ if  $\frac{f(n)}{g(n)}\to 0$ as $n\to\infty$.

\section{Filtering-Based Opinion Dynamics}
Consider a population of individuals modeled by the directed network $\mathcal{G} = \{\mathcal{V},\mathcal{E}\}$.  Each individual (node) $v_i$ has an initial opinion $x_i[0] \in \mathbb{R}$.  We assume that time progresses in a sequence of quantized increments, referred to as {\it time-steps}.  At each time-step $k \in \mathbb{N}$, each node receives the opinions of its neighbors, and updates its opinion as a function of those received values.  

In this paper, we define a {\it filtering threshold} $t_i \in \{0, 1, \ldots, n-1\}$ for each node $v_i \in \mathcal{V}$.  Based on this threshold, we consider the following class of opinion dynamics.
\begin{enumerate}
\item At each time-step $k$, each node $v_i$ gathers the opinions of its neighbors, i.e., $ \{x_j[k] \mid v_j \in \mathcal{N}_i\}$.
\item Each node $v_i$ removes the $t_i$ largest opinions in its neighborhood that are higher than its own opinion (if there are fewer than $t_i$ such opinions, node $v_i$ only disregards those opinions).  Each node $v_i$ also removes the $t_i$ smallest opinions in its neighborhood that are smaller than its own opinion (if there are fewer than $t_i$ such opinions, node $v_i$ only disregards those opinions).  Ties in opinions are broken arbitrarily.  Let $\mathcal{M}_i[k] \subset \mathcal{N}_i$ denote the set of {\it moderate} neighbors of $v_i$ at time-step $k$ (i.e., those nodes whose opinions were not discarded).
\item Each node $v_i$ updates its own opinion as
\begin{equation}
x_i[k+1] = w_{ii}[k]x_i[k] + \sum_{v_j \in \mathcal{M}_{i}[k]}w_{ij}[k]x_j[k] ,
\label{eq:LF_dynamics}
\end{equation}
where the weights $w_{ii}[k]$ and $w_{ij}[k]$, $v_j \in \mathcal{M}_{i}[k]$ are nonnegative, lower bounded by some positive constant $\eta$, and sum to $1$.  
\end{enumerate}

Note that the set $\mathcal{M}_{i}[k]$ can be empty if node $v_i$ has fewer than $2t_i$ neighbors.  In particular, if $t_i = d_i$ for some $v_i$, that node disregards all of its neighbors and thus becomes a {\it stubborn node}.  When $t_i = 0$ for all $v_i \in  \mathcal{V}$, the above dynamics reduce to classical DeGroot dynamics.  In this case, all nodes will reach consensus as long as the graph is strongly connected (or more generally, contains a spanning tree rooted at some node) \cite{DeGroot74, Ren05}.  However, when the nodes have nonzero filtering thresholds, simply having a strongly connected graph is no longer sufficient to ensure consensus.  To illustrate, consider the network shown in Fig.~\ref{fig:Counterexample} where every node has filtering threshold $1$.  This network consists of two complete subgraphs on node sets $\mathcal{S}_1$ and $\mathcal{S}_2$.   Each node in $\mathcal{S}_1$ has exactly one neighbor in the opposite set.  This network is strongly connected; in fact, it has connectivity $\frac{n}{2}$ and minimum degree $\frac{n}{2}$.  However suppose that each node in $\mathcal{S}_1$ has initial opinion $0$ and each node in $\mathcal{S}_2$ has initial opinion $1$.  Under the filtering dynamics described above, each node in each community removes the opinion of its neighbor from the opposite community, and thus no node ever changes its opinion.  Thus, consensus is not reached in this network even when each node disregards only a single highest and lowest opinion in its neighborhood at each time-step. 

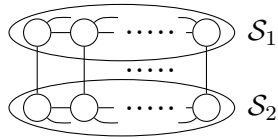
\begin{figure}[b]
\begin{center}
\begin{tikzpicture}[scale = 0.25]

  \node [circle, draw] (n1) at (1,1)  {};
  \node [circle, draw] (n2) at (3.5,1)  {};
  \node [circle, draw] (n3) at (10,1)  {};
  \node [circle, draw] (n4) at (1,5)  {};
  \node [circle, draw] (n5) at (3.5,5)  {};
  \node [circle, draw] (n6) at (10,5)  {};
  \foreach \from/\to in {n1/n2,n1/n4,n2/n5,n3/n6,n4/n5}
      \draw (\from) -- (\to);

  \node (n7) at (7,1) {{\Large $.....$}};
  \draw (n2) -- (n7);
  \draw (n7) -- (n3);
  \node (n8) at (7,5) {{\Large $.....$}};
  \draw (n5) -- (n8);
  \draw (n8) -- (n6);
  \node (n9) at (7,3) {{\Large $.....$}};
  \draw (5.5,1) ellipse (6cm and 1.5cm);
  \draw (5.5,5) ellipse (6cm and 1.5cm);
  \node at (13,5) {$\mathcal{S}_1$};
  \node at (13,1) {$\mathcal{S}_2$};

  \draw    (n1) to[out=-45,in=180] (2.8,0.2);
  \draw    (n2) to[out=-45,in=180] (5.3,0.2);
  \draw    (n4) to[out=45,in=180] (2.8,5.8);
  \draw    (n5) to[out=45,in=180] (5.3,5.8);
  \draw    (n3) to[out=-135,in=0] (8.5,0.2);
  \draw    (n6) to[out=135,in=0] (8.5,5.8);

\end{tikzpicture}
\caption{Example of a network where consensus is not guaranteed when each node ignores the single highest and single lowest opinion in its neighborhood at each time-step.}
\label{fig:Counterexample}
\end{center}
\end{figure}

In \cite{leblanc13}, we considered the above class of filtering dynamics for the case where the thresholds for each node are the same (i.e., $t_i = t_j$ for all $v_i, v_j \in \mathcal{V}$), but where certain nodes in the network were allowed to be {\it malicious} and deviate from the dynamics in arbitrary ways.  We established graph-theoretic properties that will ensure consensus of the non-malicious nodes under these conditions.  In the next section, we will generalize these conditions to the case of heterogeneous thresholds, and establish necessary and sufficient graph conditions for consensus under the filtering dynamics \eqref{eq:LF_dynamics}.  Subsequently, we will use these graph conditions to study this class of opinion dynamics in random graph models for complex networks, with randomly chosen filtering thresholds for each node.

\section{Graph-Theoretic Conditions to Ensure Consensus Under Filtering-based Opinion Dynamics with Heterogeneous Thresholds}

Given a network $\mathcal{G} = \{\mathcal{V},\mathcal{E}\}$ with $n$ nodes, let $\mathcal{T} = \{t_1, t_2, \ldots, t_n\}$ be the set of filtering thresholds.  We define the following notions.

\begin{definition}
We say that a set $\mathcal{S} \subset\mathcal{V}$ is {\bf $(\mathcal{T}+1)$-reachable} if there exists a node $v_i \in \mathcal{S}$ that has at least $t_i+1$ neighbors outside $\mathcal{S}$, i.e., $|\mathcal{N}_i \setminus \mathcal{S}| \ge t_i+1$.
\end{definition}

\begin{definition}
We say that network $\mathcal{G}$ is {\bf $(\mathcal{T}+1)$-robust} if for every pair of disjoint nonempty subsets $\mathcal{S}_1, \mathcal{S}_2 \subset \mathcal{V}$, either $\mathcal{S}_1$ or $\mathcal{S}_2$ is $(\mathcal{T}+1)$-reachable.
\end{definition}

Note that when $t_i = 0$ for all $v_i \in \mathcal{V}$, then the network being $(\mathcal{T}+1)$-robust is equivalent to it having a spanning tree rooted at some node.  The above definitions are a relatively straightforward extension of the notions of robust networks given in \cite{leblanc13,zhang2015notion}, where each node had the same threshold.  The intuition behind the definition of a $(\mathcal{T}+1)$-robust graph follows from examining the failure of consensus in Fig.~\ref{fig:Counterexample}:  that network contained two subsets of nodes where each node in each subset filtered away the only information it received from the opposite subset.  In order to ensure consensus, we would like to avoid such situations.  This is captured by the following theorem.

\begin{theorem}
Consider a network $\mathcal{G} = \{\mathcal{V},\mathcal{E}\}$ with a set of filtering thresholds $\mathcal{T}$.  Then consensus is guaranteed under the filtering-based opinion dynamics \eqref{eq:LF_dynamics} regardless of the initial opinions if and only if the network is $(\mathcal{T}+1)$-robust.
\label{thm:LF_dynamics_consensus}
\end{theorem}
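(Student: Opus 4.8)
The plan is to prove both directions by analyzing the trajectory of the maximum and minimum opinions, combined with the combinatorial structure of $(\mathcal{T}+1)$-robustness. For the \emph{necessity} direction (consensus $\Rightarrow$ $(\mathcal{T}+1)$-robust), I would argue by contrapositive: suppose the network is not $(\mathcal{T}+1)$-robust, so there exist disjoint nonempty $\mathcal{S}_1, \mathcal{S}_2$ with neither $(\mathcal{T}+1)$-reachable. This means every node $v_i \in \mathcal{S}_1$ has $|\mathcal{N}_i \setminus \mathcal{S}_1| \le t_i$, and likewise for $\mathcal{S}_2$. I would then assign initial opinion $0$ to all nodes in $\mathcal{S}_1$, opinion $1$ to all nodes in $\mathcal{S}_2$, and (say) $1/2$ to all remaining nodes. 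Each node in $\mathcal{S}_1$ has at most $t_i$ neighbors outside $\mathcal{S}_1$, all of which have opinion $\ge 1/2 > 0$, so those are exactly among the $t_i$ largest-and-higher opinions that get filtered out; hence $\mathcal{M}_i[0] \subseteq \mathcal{S}_1$ and $v_i$'s update keeps it at $0$. Symmetrically every node in $\mathcal{S}_2$ stays at $1$. By induction this persists for all time, so consensus fails — generalizing exactly the mechanism behind Fig.~\ref{fig:Counterexample}.

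For the \emph{sufficiency} direction (that $(\mathcal{T}+1)$-robustness implies consensus), I would set $M[k] = \max_{v_i} x_i[k]$ and $m[k] = \min_{v_i} x_i[k]$. A standard first observation is that because each node's update is a convex combination of its own opinion and a subset of neighbor opinions, and crucially each node retains its own opinion with weight $w_{ii}[k] \ge \eta$, the opinion of any node stays within $[m[k], M[k]]$; thus $M[k]$ is nonincreasing and $m[k]$ is nondecreasing, and both converge — say to $M^*$ and $m^*$ with $m^* \le M^*$. The goal is to show $m^* = M^*$. I would fix a small $\epsilon > 0$ and, for a given time $k$, partition the nodes into the set $\mathcal{S}_1[k]$ of nodes with opinion "close to the top" (say $> M^* - \epsilon$, or using a shrinking window argument more carefully, $\ge M[k] - \epsilon_0$ for suitable $\epsilon_0$) and $\mathcal{S}_2[k]$ of nodes "close to the bottom." The key claim is that if $m^* < M^*$, then for appropriately chosen thresholds these two sets are nonempty and disjoint, yet a $(\mathcal{T}+1)$-reachable node in, say, $\mathcal{S}_1$ must have at least $t_i + 1$ neighbors outside $\mathcal{S}_1$ — so even after filtering out its $t_i$ most extreme high neighbors, it retains at least one neighbor whose opinion is bounded away from the top, forcing its own opinion to drop by a definite amount $\delta$ (depending on $\eta$, $\epsilon$) in one step.

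The main obstacle — and where the argument needs care — is converting that "one node drops by $\delta$" into a contradiction with the convergence of $M[k]$. The subtlety is that the node that drops might not be the one currently achieving the maximum, and the sets $\mathcal{S}_1[k], \mathcal{S}_2[k]$ change with $k$. The standard fix (as in \cite{leblanc13} and related robustness literature) is a careful bookkeeping argument over a finite horizon of $n$ steps: define nested shrinking intervals $[M^* - \gamma_i, M^*]$ and track how, at each step, at least one more node must leave the "near-top" set $\mathcal{S}_1$ — because $(\mathcal{T}+1)$-robustness guarantees that as long as both $\mathcal{S}_1$ and the complementary near-bottom set are nonempty, one of them has a node with enough outside neighbors to be pulled strictly inward. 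After at most $n$ steps, one of the sets must become empty, which means every node's opinion has moved into a strictly smaller interval, i.e., $M[k+n] \le M^* - \gamma_n$ or $m[k+n] \ge m^* + \gamma_n$ for a fixed $\gamma_n > 0$; this contradicts convergence of $M[k]$ to $M^*$ (resp.\ $m[k]$ to $m^*$) unless $m^* = M^*$. I would carry out this finite-horizon contraction argument explicitly, making sure the constants $\gamma_1 > \gamma_2 > \cdots > \gamma_n > 0$ and the $\eta$-dependence are tracked so that the final $\gamma_n$ is a genuine positive constant independent of $k$.
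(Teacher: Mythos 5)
Your proposal is correct and follows essentially the same route as the paper: the necessity argument (assigning $0$, $1$, and $1/2$ to $\mathcal{S}_1$, $\mathcal{S}_2$, and the remaining nodes, and showing each node in $\mathcal{S}_1\cup\mathcal{S}_2$ filters out all outside opinions) is identical to the paper's. For sufficiency the paper simply defers to the homogeneous-threshold argument in \cite{leblanc13}, and your max/min monotonicity plus finite-horizon shrinking-interval contraction is exactly that standard argument, correctly adapted to heterogeneous thresholds.
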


\begin{proof}
For the proof of necessity, suppose the network is not $(\mathcal{T}+1)$-robust.  Then there exist two disjoint nonempty subsets $\mathcal{S}_1$ and $\mathcal{S}_2$ of nodes such that neither set is $(\mathcal{T}+1)$-reachable.  Let the initial opinions of the nodes in set $\mathcal{S}_1$ be $0$, and let the initial opinions of the nodes in set $\mathcal{S}_2$ be $1$.  Let the initial opinions of all nodes in set $\mathcal{V}\setminus \{\mathcal{S}_1\cup\mathcal{S}_2\}$ be $0.5$.  Now, since each node $v_i  \in \mathcal{S}_1$ has at most $t_i$ neighbors outside $\mathcal{S}_1$, each node will remove all of those opinions when it updates its opinion, and thus its opinion will stay at $0$.  The same reasoning holds for the set $\mathcal{S}_2$, and thus consensus will never be reached.

The proof of sufficiency follows from a straightforward generalization of the proof of sufficiency under homogeneous thresholds given in \cite{leblanc13}, and thus we omit it here.
\end{proof}

Armed with the above characterization of conditions under which the opinion dynamics \eqref{eq:LF_dynamics} lead to consensus, we now study such dynamics in random graph models for large-scale networks.

\section{Opinion Dynamics in Erd\H os-R\'enyi Random Graphs}
\label{sec:E-R}
We start by considering the outcome of the opinion dynamics \eqref{eq:LF_dynamics} in Erd\H os-R\'enyi random graphs, defined as follows.

\begin{definition}
For $n \in \mathbb{Z}_{\ge 1}$, let $\Omega_n$ be the set of all undirected graphs on $n$ nodes, and define $p(n) \in [0,1]$.  Define the probability space $(\Omega_n, \mathcal{F}_n, \mathbb{P}_n)$, where the $\sigma$-algebra $\mathcal{F}_n$ is the power set of $\Omega_n$, and $\mathbb{P}_n$ is a probability measure that assigns the probability
$$
\mathbb{P}_n(\mathcal{G}) = p(n)^{m}(1-p(n))^{{n\choose{2}}-m}
$$
to each graph $\mathcal{G}$ with $m$ edges.  Then, a graph drawn from $\Omega_n$ according to the above probability distribution is called an {\bf Erd\H os-R\'enyi} (ER) random graph, and denoted $\mathcal{G}_{n,p}$.  Equivalently, an ER graph is obtained by placing each edge in the graph independently with probability $p(n)$.
\end{definition}

\begin{definition}
Let $R$ be a graph property, and let $G_n^R \subseteq \Omega_n$ be the set of graphs on $n$ nodes that have that property. We say an ER graph has property $R$ {\bf asymptotically almost surely} (a.a.s.) if  $\lim_{n\rightarrow\infty}\mathbb{P}_n(G_n^R) = 1$. 
\end{definition}

An important feature of  $\mathcal{G}_{n,p}$ is that the model displays  `phase transitions' for certain properties. Loosely speaking, if the probability of adding an edge is `larger' than a certain value (which could be a function of $n$) then the ER graph  will have a certain property a.a.s., and if the edge probability is `smaller' than that value, then the graph will a.a.s. not have that property. We make this more precise for the following properties.

\begin{definition}
For any $r\in \mathbb{Z}_{\geq1}$, let $\mathcal{D}_r$ be the property of having minimum-degree $r$ and let $\mathcal{K}_r$ be the property of having connectivity $r$.  \end{definition}

\begin{lemma}[\cite{ErdosRenyi1961}]
For any constant $r\in \mathbb{Z}_{\geq1}$, let the edge probability be given by 
$$
p(n)= \frac{\ln n + (r-1)\ln\ln n + c(n)}{n},
$$ 
where $c(n)$ is some function of $n$.  If $c(n) \rightarrow \infty$ as $n \rightarrow \infty$, the ER graph has properties $\mathcal{D}_r$ and  $\mathcal{K}_r$  a.a.s.  If $c(n) \rightarrow -\infty$ as $n \rightarrow \infty$, then the ER graph a.a.s. does not have either of the properties $\mathcal{D}_r$ or $\mathcal{K}_r$.
\label{lem:threshold_connectivity}
\end{lemma}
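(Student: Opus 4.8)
This is the classical Erd\H os-R\'enyi degree/connectivity threshold, and the plan is to combine the first- and second-moment methods with a union bound over vertex cuts. First note the elementary fact that $\kappa(\mathcal{G})\le\delta(\mathcal{G})$ for every graph on at least two vertices (deleting all neighbours of a minimum-degree vertex disconnects it), so property $\mathcal{K}_r$ implies property $\mathcal{D}_r$. Hence when $c(n)\to-\infty$ it is enough to show that $\mathcal{G}_{n,p}$ a.a.s.\ fails $\mathcal{D}_r$ (which simultaneously rules out $\mathcal{K}_r$), and when $c(n)\to\infty$ it is enough to show that $\mathcal{G}_{n,p}$ a.a.s.\ has $\mathcal{D}_r$ and that $\mathbb{P}_n[\delta(\mathcal{G}_{n,p})\ge r,\ \mathcal{G}_{n,p}\ \text{is not}\ r\text{-connected}]\to0$.

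For the minimum-degree statement, let $X$ count the nodes of degree at most $r-1$, so that $\mathcal{G}_{n,p}$ has $\mathcal{D}_r$ exactly when $X=0$. By linearity,
\[
\mathbb{E}[X]=n\sum_{j=0}^{r-1}\binom{n-1}{j}p(n)^{j}\bigl(1-p(n)\bigr)^{n-1-j},
\]
and since $np(n)=\ln n+(r-1)\ln\ln n+c(n)$, a routine estimate (the $j=r-1$ summand dominates the others by a factor of order $\ln n$, and $(1-p(n))^{n-1-j}=e^{-np(n)}(1+o(1))$ because $np(n)^{2}\to0$) gives $\mathbb{E}[X]=\frac{e^{-c(n)}}{(r-1)!}(1+o(1))$. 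If $c(n)\to\infty$ then $\mathbb{E}[X]\to0$, and Markov's inequality yields $\mathbb{P}_n[X\ge1]\to0$, i.e.\ $\mathcal{D}_r$ a.a.s. If $c(n)\to-\infty$ then $\mathbb{E}[X]\to\infty$, and I would finish with the second moment method: writing $X=\sum_{i}\mathbf{1}_{\{d_i\le r-1\}}$, the events $\{d_i\le r-1\}$ and $\{d_j\le r-1\}$ for $i\neq j$ are dependent only through the single shared potential edge $\{v_i,v_j\}$, so $\mathbb{P}_n[d_i\le r-1,\,d_j\le r-1]=(1+o(1))\,\mathbb{P}_n[d_i\le r-1]^{2}$ uniformly in $i,j$; thus $\mathrm{Var}(X)=o(\mathbb{E}[X]^{2})$, and Chebyshev's inequality gives $X\ge1$ a.a.s., so $\mathcal{G}_{n,p}$ a.a.s.\ lacks $\mathcal{D}_r$ (and therefore $\mathcal{K}_r$).

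For the connectivity statement, suppose $c(n)\to\infty$. If $\delta(\mathcal{G}_{n,p})\ge r$ and $\mathcal{G}_{n,p}$ is not $r$-connected (with $n$ large), then removing a minimum vertex cut, which has size at most $r-1$, leaves a smallest component on a node set $\mathcal{A}$ with $2\le|\mathcal{A}|\le n/2$ having at most $r-1$ neighbours outside $\mathcal{A}$; the bound $|\mathcal{A}|\ge2$ uses $\delta(\mathcal{G}_{n,p})\ge r$ to exclude a single-node component. It therefore suffices to show
\[
\sum_{a=2}^{\lfloor n/2\rfloor}\binom{n}{a}\,\mathbb{P}_n\!\Bigl[\mathrm{Bin}\bigl(n-a,\,1-(1-p(n))^{a}\bigr)\le r-1\Bigr]\longrightarrow 0,
\]
since for a fixed $a$-node set $\mathcal{A}$ the number of nodes outside $\mathcal{A}$ that are adjacent to $\mathcal{A}$ is $\mathrm{Bin}\bigl(n-a,1-(1-p(n))^{a}\bigr)$, each outside node attaching to $\mathcal{A}$ independently with probability $1-(1-p(n))^{a}$. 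I would split this sum at $a\approx n/\ln n$. For $a$ below that threshold, $1-(1-p(n))^{a}\approx a\,p(n)$ and the binomial is essentially Poisson with mean of order $a\ln n$, so its lower tail at $r-1$ has order $(a\ln n)^{r-1}e^{-a\,np(n)}$ and the $a$-th summand decays super-geometrically in $a$; the partial sum is then governed by the $a=2$ term, which tends to $0$. For $a$ above that threshold, $1-(1-p(n))^{a}$ is bounded below by a positive constant, the binomial concentrates near its mean (of order $n$), and the bound $\mathbb{P}_n[\,\cdot\le r-1]\le r\,n^{r-1}\bigl(1-p(n)\bigr)^{a(n-a-r+1)}$ decays fast enough---using $p(n)\,a(n-a)\ge n(1-o(1))$ throughout this range---to absorb $\binom{n}{a}$. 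Both partial sums tend to $0$, so $\mathcal{K}_r$ holds a.a.s.

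The step I expect to be the main obstacle is the connectivity argument: the minimum-degree claim is essentially a moment computation, but to exclude all small vertex cuts simultaneously one must control the union bound above, and no single uniform estimate on the binomial tail works across the entire range of component sizes $a$---a Poisson-type bound is needed for small components and a concentration (lower-tail) bound for large ones, which is why the split at $a\approx n/\ln n$ is essential. The reduction also relies on $\delta(\mathcal{G}_{n,p})\ge r$ to force $|\mathcal{A}|\ge2$: the $a=1$ term is not summable on its own and is precisely the minimum-degree obstruction already dealt with.
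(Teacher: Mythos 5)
The paper offers no proof of this lemma at all --- it is imported as a classical result of Erd\H os and R\'enyi (1961), so there is no in-paper argument to compare yours against; what you have written is essentially the standard textbook proof, and its architecture is correct. The reduction (connectivity is at most minimum degree, so $\mathcal{K}_r$ implies $\mathcal{D}_r$), the first-moment bound for $\mathcal{D}_r$ when $c(n)\to\infty$, the second-moment argument when $c(n)\to-\infty$, and the union bound over candidate components $\mathcal{A}$ with $2\le|\mathcal{A}|\le n/2$ and at most $r-1$ outside neighbours, split at $|\mathcal{A}|\approx n/\ln n$, are all the right ingredients, and you correctly identify the connectivity union bound as the delicate step. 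Two details to tighten if you write it out in full. First, the asymptotic $\mathbb{E}[X]=\frac{e^{-c(n)}}{(r-1)!}(1+o(1))$ requires $c(n)=o(\ln n)$ so that $(np)^{r-1}=(\ln n)^{r-1}(1+o(1))$; for faster-growing $c(n)$ the equality fails, though the only thing you need, $\mathbb{E}[X]\to 0$, still holds since $e^{-c}$ beats the polynomial correction. Second, in the large-$a$ range the pairing of $\binom{n}{a}\le 2^n$ against $e^{-pa(n-a)}$ with $pa(n-a)\ge n(1-o(1))$ closes only because $\ln 2<1$, which is uncomfortably tight at the lower end $a\approx n/\ln n$; it is cleaner to use $\binom{n}{a}\le (en/a)^a$ together with $pa(n-a)\ge \frac{a\ln n}{2}$, which kills each term at rate $e^{-a\ln n(1/2-o(1))}$ uniformly over the range and makes the extra factor $rn^{r-1}$ harmless. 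With those estimates filled in, your argument is complete and matches the proof in the cited source.
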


Based on the above result, we see that the ER graph requires $p(n) = \frac{\ln n + c(n)}{n}$ with $c(n) \rightarrow \infty$ in order to be connected a.a.s. (as this corresponds to the property $\mathcal{K}_1$).  This is the regime that we will focus on here, as consensus under generic initial opinions cannot be obtained in disconnected networks, regardless of the filtering thresholds at the nodes.   In particular, in our proofs, we will take $r \in \mathbb{Z}_{\ge 1}$ to be the largest integer such  that $p(n) \ge \frac{\ln n + (r-1)\ln\ln n + c(n)}{n}$ where $c(n) \rightarrow \infty$; this $r$ is then the minimum degree of the graph.   If this is true for all $r \in \mathbb{N}$, then our analysis holds for any positive integer $r$.  

As we are interested in the effects of {\it heterogeneous filtering thresholds}, we will consider the case where the threshold for each node is drawn independently from a given distribution $q(\cdot)$ with support $[0, \bar{r}]$, where $\bar{r} \in \mathbb{Z}_{\ge 1}$.  Specifically, the probability that node $v_i$ has filtering threshold $t$ is given by $q(t)$.  There are a few subtle details underlying this analysis.  First, note that if a node has a filtering threshold that is larger than its degree, then consensus will not be guaranteed in general (e.g., if that node has the largest opinion in the network, it will disregard all of its neighbors, and all other nodes with nonzero thresholds will disregard the extreme opinion of that node).  On the other hand, even if the filtering thresholds are much smaller than the node degrees, the graph may not be $(\mathcal{T}+1)$-robust; for example, in the network shown in Fig.~\ref{fig:Counterexample}, all nodes have degree $\frac{n}{2}$ and filtering threshold $1$, but the graph is not $(\mathcal{T}+1)$-robust.  Thus, the filtering threshold distribution must be such that these events occur with vanishing probability.  We provide such a characterization below.  

\begin{theorem}
Consider an ER graph with edge probability 
\begin{equation}
p(n) = \frac{\ln n + (r-1)\ln\ln n + c(n)}{n},
\label{eq:ER_edge_prob}
\end{equation}
where $c(n) = o(\ln\ln n)$ and goes to $\infty$ with $n$.  Suppose the filtering threshold distribution has support $[0, \bar{r}]$ for some $\bar{r} \in \mathbb{N}$, and satisfies
\begin{equation}
q(t) = O\left(\frac{1}{{(\ln n)}^{t-r+1}}\right)
\label{eq:ER_filtering_threshold}
\end{equation}
for $t \ge r$.  Then the ER random graph with filtering thresholds drawn from $q(\cdot)$ will facilitate consensus under the local-filtering opinion dynamics a.a.s.
\label{thm:ER_graph_filtering_thresholds}
\end{theorem}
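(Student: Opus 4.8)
The plan is to reduce to $(\mathcal{T}+1)$-robustness via Theorem~\ref{thm:LF_dynamics_consensus} and then run a first-moment argument over small non-reachable sets. First I would observe that $\mathcal{G}$ is $(\mathcal{T}+1)$-robust as soon as \emph{every} nonempty $\mathcal{S}\subseteq\mathcal{V}$ with $|\mathcal{S}|\le\lfloor n/2\rfloor$ is $(\mathcal{T}+1)$-reachable: for any pair of disjoint nonempty $\mathcal{S}_1,\mathcal{S}_2$, the smaller of the two has size at most $\lfloor n/2\rfloor$, hence is $(\mathcal{T}+1)$-reachable. Taking the contrapositive,
\[
\mathbb{P}\big[\mathcal{G}_{n,p}\text{ is not }(\mathcal{T}+1)\text{-robust}\big]\;\le\;\sum_{a=1}^{\lfloor n/2\rfloor}\binom{n}{a}\,P_a,
\]
where $P_a$ is the probability that a fixed $a$-set $\mathcal{S}$ fails to be $(\mathcal{T}+1)$-reachable, i.e.\ every $v_i\in\mathcal{S}$ has at most $t_i$ neighbors in $\mathcal{V}\setminus\mathcal{S}$. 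By Theorem~\ref{thm:LF_dynamics_consensus}, it suffices to show this bound tends to $0$.

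Next I would evaluate $P_a$. For distinct $v_i,v_j\in\mathcal{S}$ the candidate edges to $\mathcal{V}\setminus\mathcal{S}$ are disjoint, and the thresholds are drawn independently, so the events ``$v_i$ has at most $t_i$ neighbors outside $\mathcal{S}$'' are mutually independent; hence $P_a=\rho_a^{\,a}$ with $\rho_a=\sum_{t=0}^{\bar r}q(t)\,\mathbb{P}[\mathrm{Bin}(n-a,p)\le t]$. The crux is a uniform tail bound: using $\binom{N}{j}p^j\le(Np)^j/j!$, the fact that $Np=\Theta(\ln n)\gg 2\bar r$ (so the partial sum is within a constant of its top term), and $(1-p)^{N-t}\le e^{-p(N-t)}$, I get $\mathbb{P}[\mathrm{Bin}(n-a,p)\le t]=O\big((\ln n)^{t}e^{-p(n-a)}/t!\big)$ uniformly over $a\le n/2$ and $t\le\bar r$. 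Since $e^{-pn}=\big(n(\ln n)^{r-1}e^{c(n)}\big)^{-1}$ and $c(n)=o(\ln\ln n)$ keeps $(np)^t=O((\ln n)^t)$, this equals $O\big((\ln n)^{t-r+1}e^{pa}/(n\,e^{c(n)})\big)$. Plugging in the hypothesis $q(t)=O\big((\ln n)^{-(t-r+1)}\big)$ for $t\ge r$ cancels the $(\ln n)^{t-r+1}$ factor exactly (for $t<r$ that factor is $\le 1$ and only helps), so each of the $\bar r+1$ terms of $\rho_a$ is $O\big(e^{pa}/(n\,e^{c(n)})\big)$, giving $\rho_a\le K\,e^{pa}/(n\,e^{c(n)})$ for a constant $K$ independent of $a$.

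Finally, with $\binom{n}{a}\le(en/a)^a$,
\[
\binom{n}{a}P_a\;\le\;\Big(\tfrac{eK\,e^{pa}}{a\,e^{c(n)}}\Big)^{a}\;=:\;\theta_a^{\,a}.
\]
A direct computation shows $\theta_a$ is unimodal in $a$, decreasing for $a\lesssim 1/p\approx n/\ln n$ and increasing thereafter, so $\sup_{1\le a\le n/2}\theta_a=\max\{\theta_1,\theta_{n/2}\}$, where $\theta_1=O(e^{-c(n)})\to0$ and $\theta_{n/2}=O\big((\ln n)^{(r-1)/2}/(\sqrt n\,e^{c(n)/2})\big)\to0$. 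Writing $s_n$ for this supremum, for large $n$ we have $s_n<1$ and $\sum_{a=1}^{\lfloor n/2\rfloor}\theta_a^{\,a}\le\sum_{a\ge1}s_n^{\,a}=s_n/(1-s_n)\to0$, which is the desired conclusion.

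I expect the main obstacle to be the tail estimate in the second step: it must hold uniformly over \emph{all} set sizes $a\in[1,n/2]$, and the decay rate $q(t)=O((\ln n)^{-(t-r+1)})$ is precisely tuned so that the probability a node has threshold $t$ and at most $t$ neighbors is of the same order as the minimum-degree-$r$ failure probability $1/(n\,e^{c(n)})$. One has to verify the resulting constant $K$ does not depend on $a$, and that the factor $e^{pa}$ coming from large sets ($a=\Theta(n)$) is still absorbed by $1/a^a$ and $e^{-c(n)}$ — this is where the slow growth $c(n)=o(\ln\ln n)$ and the unimodality of $\theta_a$ are used.
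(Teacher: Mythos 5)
Your proposal is correct and follows essentially the same route as the paper's proof: reduce to $(\mathcal{T}+1)$-robustness via the reachability of all small sets, bound the binomial tail by its top term, use the decay of $q(t)$ to cancel the $(\ln n)^{t-r+1}$ factor, and close with a union bound and geometric series. The only cosmetic differences are that you stop at sets of size $n/2$ (the paper pushes to $\alpha n$ with $\alpha\to1$ to record a stronger byproduct) and you verify the unimodality of $\theta_a$ directly rather than citing it from prior work.
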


\begin{proof}
We will show that the ER graph is $(\mathcal{T}+1)$-robust a.a.s. under the conditions in the theorem,\footnote{The choice of $c(n) = o(\ln\ln n)$ in the theorem is for technical reasons, but is not restrictive; as we argue later in Section~\ref{sec:hetero}, any probability larger than the one given in this theorem will also suffice for the result to hold.}  where $\mathcal{T}$ is the vector of filtering thresholds, each drawn independently from the distribution $q(\cdot)$.   To do this, we will show that every subset of vertices of cardinality up to $\alpha n$ is $(\mathcal{T}+1)$-reachable, where $\alpha \in (0, 1)$ is some function of $n$ that goes to $1$.  This will be sufficient to prove $(\mathcal{T}+1)$-robustness of the network, since if we take any two disjoint nonempty sets, at least one of them will have size at most $\frac{n}{2} \le \alpha n$, and thus will be $(\mathcal{T}+1)$-reachable.

To this end, consider any subset $\mathcal{S}\subset\mathcal{V}$ of nodes with cardinality $m \in \{1, 2, \ldots, \alpha n\}$.  Consider some node $v_i \in \mathcal{S}$.  The probability that $v_i$ has less than $t_i + 1$ neighbors outside $\mathcal{S}$ (where $t_i$ is randomly chosen from the distribution $q(\cdot)$) is given by
\begin{equation*}
\sum_{t = 0}^{\bar{r}}q(t)\sum_{j = 0}^{t}{{n-m}\choose{j}}p^j(1-p)^{n-m-j},
\end{equation*}
which is upper bounded by
\begin{equation}
\sum_{t = 0}^{\bar{r}}q(t)\sum_{j = 0}^{t}{{n}\choose{j}}p^j(1-p)^{n-m-j}.
\label{eq:single_node_ER}
\end{equation}
Thus, the probability $\mathbb{P}_{\mathcal{S}}$ that all nodes $v_i$ in set $\mathcal{S}$ have fewer than $t_i+1$ neighbors outside $\mathcal{S}$ is upper bounded by 
\begin{equation}
\mathbb{P}_{\mathcal{S}} \le  \left(\sum_{t = 0}^{\bar{r}}q(t)\sum_{j = 0}^{t}{{n}\choose{j}}p^j(1-p)^{n-m-j}\right)^{m}.  \label{eqn:P_S}
\end{equation}
Let $b_j \triangleq {{n}\choose{j}}p^j(1-p)^{n-m-j}$.  We have
$$
\frac{b_j}{b_{j-1}} = \frac{n-j+1}{j}\frac{p}{1-p} \ge \frac{n-t}{t}\frac{p}{1-p}
$$
for $j \le t$, which goes to $\infty$ for $t = o(np)$.  Thus, for any $\beta > 1$, we have $\frac{b_j}{b_{j-1}} > \beta$ for sufficiently large $n$.  This yields
\begin{equation*}
\sum_{j = 0}^{t}{{n}\choose{j}}p^j(1-p)^{n-m-j} \le \sum_{j = 0}^{t}\frac{1}{\beta^{t-j}}b_t \le b_t\sum_{j =0}^{\infty}\frac{1}{\beta^j} = C_1b_t
\end{equation*}
for some positive constant $C_1$. Substituting this back into \eqref{eqn:P_S}, we have
$$
\mathbb{P}_{\mathcal{S}} \le \left(C_1\sum_{t = 0}^{\bar{r}}q(t){{n}\choose{t}}p^t(1-p)^{n-m-t}\right)^{m}.
$$
Using the fact that ${n\choose{t}} \le n^t$ and $1-p \le e^{-p}$, we have
\begin{equation*}
\mathbb{P}_{\mathcal{S}} \le  \left(C_1\sum_{t = 0}^{\bar{r}}q(t)(np)^te^{-np}e^{p(m+t)}\right)^{m}.
\end{equation*}
Substituting the expression for $p$ from the theorem, we have
\begin{equation*}
\mathbb{P}_{\mathcal{S}} \le  \left(C_1\sum_{t = 0}^{\bar{r}}q(t)\frac{(\ln n + (r-1)\ln\ln n + c)^t}{n(\ln n)^{r-1}}e^{-c}e^{p(m+t)}\right)^{m}.
\end{equation*}
Using condition \eqref{eq:ER_filtering_threshold} on the filtering threshold distribution given in the theorem, we have
$$
q(t)(\ln n + (r-1)\ln\ln n + c)^t\frac{1}{(\ln n)^{r-1}} = O(1).
$$
Thus, we have
$$
\mathbb{P}_{\mathcal{S}} \le \left(C_2\sum_{t = 0}^{\bar{r}}\frac{1}{n}e^{-c}e^{p(m+t)}\right)^{m} \le \left(C_3\frac{1}{n}e^{-c}e^{pm}\right)^{m}
$$
for some positive constants $C_2$ and $C_3$.

Let $\mathbb{P}_m$ be the probability that there is {\it some} set $\mathcal{S}\subset\mathcal{V}$ of size $m$ such that all nodes $v_i \in \mathcal{S}$ have fewer than $t_i+1$ neighbors outside $\mathcal{S}$.  By the union bound, we have
\begin{equation*}
\mathbb{P}_{m} \le {n\choose{m}}\mathbb{P}_{\mathcal{S}} 
\le \left(C_3\frac{ne}{m}\frac{1}{n}e^{-c}e^{pm}\right)^{m} = \left(C_3e^{1-c}\frac{e^{pm}}{m}\right)^{m},
\end{equation*}
where we used the fact that ${n\choose{m}} \le \left(\frac{ne}{m}\right)^m$.  It was shown in \cite{zhang2015notion} that the function $\frac{e^{pm}}{m} \le \max\{e^p, \frac{1}{\alpha n}e^{\alpha np}\}$ for $1 \le m \le \alpha n$.  For the edge probability $p$ of the form given in the theorem, we see that 
\begin{equation*}
\frac{1}{\alpha n}e^{\alpha np} = \frac{1}{\alpha}e^{\alpha np - \ln n} = \frac{1}{\alpha}e^{-(1-\alpha)\ln n +\alpha(r-1)\ln\ln n +\alpha c(n)} = o(1)
\end{equation*}
whenever $\ln\ln n = o((1-\alpha)\ln n)$.  Let $\alpha$ be such that this holds.  Then we have $\frac{e^{pm}}{m} \le e^p \le e$ for sufficiently large $n$.  Thus, we have $\mathbb{P}_m \le \left(C_4e^{-c}\right)^m$ 
for some constant $C_4$.  Finally, the probability that some set of size between $1$ and $\alpha n$ is not $(\mathcal{T}+1)$-reachable is upper bounded by
\begin{equation*}
\sum_{m =1}^{\lfloor\alpha n\rfloor}\mathbb{P}_m \le \sum_{m =1}^{\lfloor\alpha n\rfloor}\left(C_4e^{-c}\right)^m \le \sum_{m =1}^{
\infty}\left(C_4e^{-c}\right)^m = \frac{C_4e^{-c}}{1-C_4e^{-c}}.
\end{equation*}
Since $c \rightarrow \infty$, we see that this probability goes to zero asymptotically, and thus the graph will be $(\mathcal{T}+1)$-robust asymptotically almost surely.
\end{proof}

Note that the filtering threshold distribution in the above theorem is allowed to have support that extends past the minimum degree $r$ of the network.  The key is to ensure that the probability that a node gets assigned a threshold above its degree goes to zero.  The above result characterizes a condition on the distribution (given by \eqref{eq:ER_filtering_threshold}) to ensure this. This result generalizes the result on ER graphs given in \cite{zhang2015notion}, which focused on the case of all nodes having the same threshold (below the minimum degree of the graph).  A byproduct of the above proof is that all subsets of vertices of size up to $\alpha n$ are guaranteed to be $(\mathcal{T}+1)$-reachable a.a.s., where $\alpha$ is such that $\ln\ln n = o((1-\alpha)\ln n)$.  For instance, $\alpha = 1 - \frac{1}{(\ln n)^{\epsilon}}$ for $\epsilon \in (0,1)$ satisfies this condition.

\section{Opinion Dynamics in Random Graphs with Arbitrary Community Structure}
\label{sec:rin}
In the previous section, we considered local-filtering based opinion dynamics in Erd\H os-R\'enyi graphs, where every possible edge appears independently with the same probability.  In this section, we extend our discussion to random graphs that consist of a set of communities (or subnetworks), where only the edges between communities appear independently with a certain probability $p$.  We will remain agnostic about the intra-community topologies.  

\begin{definition}
A {\bf random interdependent network} consists of $k$ subnetworks $\mathcal{G}_i = \{\mathcal{V}_i, \mathcal{E}_i\}$, $i = 1, \ldots, k$, along with a set of inter-network edges 
$$
\mathcal{E}_I \subseteq \bigcup_{i \ne j} \left\{\mathcal{V}_i \times \mathcal{V}_j\right\}.
$$ 
Specifically, $\mathcal{E}_I$ is obtained by placing each edge in $\bigcup_{i \ne j} \left\{\mathcal{V}_i \times \mathcal{V}_j\right\}$ independently with a probability $p$ (which can be a function of the  number of nodes in the network).
\end{definition}

In the rest of this section, we will assume that $|\mathcal{V}_i| = n$ for $i = 1, \ldots, k$, so that the random interdependent network has $nk$ nodes in total.  Note that an Erd\H os-R\'enyi graph on $nk$ nodes is a special case of the above definition, where each edge within each subnetwork is also placed with probability $p$. When the subnetwork edges are placed independently with a probability $\bar{p}$ that is different from $p$, such networks are known as {\it stochastic block models} \cite{decelle2011asymptotic,LeLarge2015}.  We will make no such assumptions on the intra-network topology here, however.  

Recent work has studied various structural properties of random interdependent networks, including algebraic connectivity and robustness \cite{shahrivar2015robustness}.  Here, as in the previous section, we consider the case where each node in the network has a personal opinion threshold, drawn from a distribution $q$ (i.e., $q(t)$ is the probability of a given node having threshold $t$).  We have the following result.

\begin{theorem}
Fix $k \in \mathbb{Z}_{\ge 2}$, and consider a random interdependent network on $nk$ nodes where each inter-network edge is placed independently with probability 
\begin{equation}
p(n) = \frac{\ln n + (r-1)\ln\ln n + c(n)}{(k-1)n},
\label{eq:RIN_edge_probability}
\end{equation}
where $c(n) \rightarrow \infty$ and $c(n) = o(\ln\ln n)$.  Suppose the opinion threshold distribution has support $[0, \bar{r}]$ for some $\bar{r}\in \mathbb{N}$ and  satisfies
$$
q(t) = O\left(\frac{1}{(\ln n)^{t-r +1}}\right)
$$
for $t \ge r$.  Then the random interdependent network with opinion thresholds drawn from $q(\cdot)$ will facilitate consensus under the local-filtering opinion dynamics a.a.s.
\end{theorem}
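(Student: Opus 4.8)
The plan is to follow the template of the proof of Theorem~\ref{thm:ER_graph_filtering_thresholds}, adapting it to the more limited randomness available here (only the inter-network edges are random, and the intra-network topologies are arbitrary). By Theorem~\ref{thm:LF_dynamics_consensus} it suffices to show that the random interdependent network is $(\mathcal{T}+1)$-robust a.a.s.; and since any two disjoint nonempty vertex subsets cannot both have cardinality exceeding $nk/2$, it is enough to show that a.a.s.\ every subset $\mathcal{S}\subseteq\mathcal{V}$ with $1\le|\mathcal{S}|\le\lfloor nk/2\rfloor$ is $(\mathcal{T}+1)$-reachable. Throughout I will use only the random inter-network edges and discard all information about the intra-network edges, since the latter can only increase $|\mathcal{N}_i\setminus\mathcal{S}|$ and hence help reachability.

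Fix a subset $\mathcal{S}$ with $|\mathcal{S}|=m$ and put $m_i=|\mathcal{S}\cap\mathcal{V}_i|$, so $m=\sum_{i=1}^{k}m_i$. Call a node $v\in\mathcal{S}\cap\mathcal{V}_i$ \emph{bad} if it has at most $t_v$ inter-network neighbours outside $\mathcal{S}$; note it has $N_v=(k-1)n-(m-m_i)$ \emph{potential} such neighbours. The key structural observation is that, for any collection of distinct nodes of $\mathcal{S}$, the events ``$v$ is bad'' are mutually independent: each depends only on the states of the potential edges joining $v$ to $\mathcal{V}\setminus\mathcal{S}$, and for $v\ne v'$ in $\mathcal{S}$ these edge-sets are disjoint (an edge incident to both $v$ and $v'$ has both endpoints in $\mathcal{S}$), while the thresholds $t_v$ are independent by hypothesis. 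Hence for any $\mathcal{C}\subseteq\mathcal{S}$ we have $\mathbb{P}(\text{every node of }\mathcal{S}\text{ is bad})\le\prod_{v\in\mathcal{C}}\mathbb{P}(v\text{ bad})$, and $\mathbb{P}(v\text{ bad})=\sum_{t=0}^{\bar{r}}q(t)\sum_{j=0}^{t}\binom{N_v}{j}p^j(1-p)^{N_v-j}$, which has exactly the form \eqref{eq:single_node_ER} with ``$n$'' replaced by $(k-1)n$ and ``$m$'' by $m-m_i$. Running the same chain of estimates as in the proof of Theorem~\ref{thm:ER_graph_filtering_thresholds} --- the geometric-ratio bound on partial binomial sums, $\binom{N_v}{j}\le\binom{(k-1)n}{j}\le((k-1)n)^j$, $1-p\le e^{-p}$, the identity $(k-1)np=\ln n+(r-1)\ln\ln n+c(n)$ coming from \eqref{eq:RIN_edge_probability}, and the hypotheses $q(t)=O((\ln n)^{-(t-r+1)})$ and $c(n)=o(\ln\ln n)$ --- yields $\mathbb{P}(v\text{ bad})\le\frac{C}{n}e^{-c(n)}e^{(m-m_i)p}$ for some constant $C$ not depending on $\mathcal{S}$.

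It remains to choose $\mathcal{C}$ and union bound over the $\binom{nk}{m}$ subsets of size $m$; the only genuinely new feature is the dependence of $N_v$ on $m$ and on the subnetwork containing $v$, which I handle by splitting into two regimes. For \emph{small} sets, $m\le\varepsilon_0 n$ with $\varepsilon_0\in(0,1)$ a fixed constant, take $\mathcal{C}=\mathcal{S}$; multiplying the per-node bounds over $v\in\mathcal{S}$ and using Cauchy--Schwarz, $\sum_i m_i(m-m_i)=m^2-\sum_i m_i^2\le m^2(1-1/k)$, gives $\mathbb{P}(\text{every node of }\mathcal{S}\text{ is bad})\le(Cn^{-1}e^{-c(n)})^m e^{(1-1/k)pm^2}$, so with $\binom{nk}{m}\le(enk/m)^m$ we obtain $\mathbb{P}_m\le\big(C'e^{-c(n)}\tfrac{e^{\lambda m}}{m}\big)^m$ where $\lambda=(1-1/k)p=\frac{\ln n+(r-1)\ln\ln n+c(n)}{kn}$; exactly as in the ER proof, $\tfrac{e^{\lambda m}}{m}\le\max\{e^{\lambda},\tfrac{1}{\varepsilon_0 n}e^{\varepsilon_0 n\lambda}\}\le e$ for all $1\le m\le\varepsilon_0 n$ once $n$ is large, so $\mathbb{P}_m\le(C''e^{-c(n)})^m$ and $\sum_{m=1}^{\lfloor\varepsilon_0 n\rfloor}\mathbb{P}_m\to 0$ since $c(n)\to\infty$. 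For \emph{large} sets, $\varepsilon_0 n<m\le\lfloor nk/2\rfloor$, let $i^{*}$ be a subnetwork containing the most nodes of $\mathcal{S}$ and take $\mathcal{C}=\mathcal{S}\cap\mathcal{V}_{i^{*}}$; then $|\mathcal{C}|=m_{i^{*}}\ge m/k$, while $m-m_{i^{*}}\le(1-1/k)m\le(k-1)n/2$, so every factor is at most $\phi(n):=\tfrac{C}{n}e^{-c(n)}e^{(k-1)np/2}=C\,n^{-1/2}(\ln n)^{(r-1)/2}e^{-c(n)/2}=o(1)$, giving $\mathbb{P}_m\le\binom{nk}{m}\phi(n)^{m/k}\le\big((enk/m)^k\phi(n)\big)^{m/k}$. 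Since $enk/m<ek/\varepsilon_0$ is bounded while $\phi(n)\to 0$, for large $n$ the base is below $1/2$, so $\mathbb{P}_m\le 2^{-m/k}\le 2^{-\varepsilon_0 n/k}$ and $\sum_{m>\varepsilon_0 n}\mathbb{P}_m\to 0$ as well. Combining the regimes, a.a.s.\ every subset of size at most $\lfloor nk/2\rfloor$ is $(\mathcal{T}+1)$-reachable, the network is $(\mathcal{T}+1)$-robust a.a.s., and Theorem~\ref{thm:LF_dynamics_consensus} gives the claim.

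I expect the main obstacle to be the large-set regime. The crude bound $N_v\ge(k-1)n-m$ that works for small sets degrades as $m$ grows, and for $k=2$ it becomes nearly vacuous as $m\to n$; one is therefore forced to discard all of $\mathcal{S}$ except its densest subnetwork and to work with only $\ge m/k$ of its nodes, accepting a per-node bad-probability as large as $n^{-1/2}$. This is fine because $m=\Theta(n)$ there, so the product over $\ge m/k$ nodes overwhelms the $\binom{nk}{m}$ term; the delicate part is arranging the constants and the threshold $\varepsilon_0$ (in particular for $k=2$) so that the two regimes jointly cover $[1,\lfloor nk/2\rfloor]$ and so that the per-node estimate --- especially the cancellation coming from $q(t)=O((\ln n)^{-(t-r+1)})$ --- remains valid throughout. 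The small-set regime, by contrast, is essentially the ER argument with $k$-dependent constants, the one new ingredient being the factor $e^{(1-1/k)pm^2}$, which is absorbed by the same convexity estimate on $m\mapsto e^{\lambda m}/m$ used in the proof of Theorem~\ref{thm:ER_graph_filtering_thresholds}.
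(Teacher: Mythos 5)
Your proposal is correct, and it follows the paper's overall template (reduce consensus to $(\mathcal{T}+1)$-robustness via Theorem~\ref{thm:LF_dynamics_consensus}, reduce robustness to reachability of all sets of size at most $nk/2$, bound the per-node ``bad'' probability by a truncated binomial with $(k-1)n$ potential inter-network neighbours, and union bound). The difference is in how much of the union-bound step you actually carry out. The paper's proof stops at the per-node bound and asserts that the remainder goes through verbatim with $n$ replaced by $n(k-1)$; you instead work the union bound out explicitly, and in doing so you supply an argument that the literal substitution does not: with the crude estimate $(1-p)^{N_v-j}\le(1-p)^{(k-1)n-m-j}$ the resulting factor $e^{pm}/m$ is \emph{not} $o(1)$ for $m$ near $nk/2$ when $k=2$ (there $p\cdot nk/2=\ln n+(r-1)\ln\ln n+c$, so $\tfrac{2}{nk}e^{pnk/2}=(\ln n)^{r-1}e^{c}\to\infty$), so the large-set regime genuinely needs the extra care you give it. Both of your devices work: the Cauchy--Schwarz identity $\sum_i m_i(m-m_i)\le(1-\tfrac1k)m^2$ replaces the per-node exponent $pm$ by an effective $\lambda m$ with $\lambda=(1-\tfrac1k)p$, and the densest-community restriction handles large $m$ with only $m/k$ independent factors. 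A small simplification you may not have noticed: the Cauchy--Schwarz refinement alone already closes the argument over the whole range $1\le m\le nk/2$, since $\tfrac{2}{nk}e^{\lambda nk/2}=\tfrac{2}{nk}\,n^{1/2}(\ln n)^{(r-1)/2}e^{c/2}=o(1)$, so the max-of-endpoints bound on $e^{\lambda m}/m$ gives $e^{\lambda m}/m\le e$ uniformly and the second (densest-community) regime can be dispensed with. Either way, your write-up is a complete and correct proof, and is in fact more careful than the one in the paper.
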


\begin{proof}
The proof follows in a similar manner to that for Erd\H os-R\'enyi graphs, although additional care must be taken to handle the fact that we are only placing edges between the subnetworks. 

To this end, consider a set $S \subset \cup_{i = 1}^{k}\mathcal{V}_i$ consisting of $m$ nodes, where $1 \le m \le \alpha nk$ (for some function $\alpha$ that goes to $1$).  Denote $|S \cap \mathcal{V}_i| = m_i$.  Consider some node $v \in \mathcal{S}$, and suppose $v \in \mathcal{V}_i$.  There are $n(k-1)- (m - m_i)$ nodes that are not in set $S$ or $\mathcal{V}_i$ for $v$ to connect to.  Thus, the probability that $v$ has fewer than $t+1$ neighbors outside $S \cup \mathcal{V}_i$ (where $t$ is drawn from the distribution $q(\cdot)$) is given by
\begin{equation*}
\sum_{t =0}^{\bar{r}}q(t)\sum_{j=0}^{t}{{n(k-1)- (m - m_i)}\choose{j}}p^j (1-p)^{n(k-1)- (m - m_i) - j}.
\end{equation*}
This is upper bounded by
$$
\sum_{t =0}^{\bar{r}}q(t)\sum_{j=0}^{t}{{n(k-1)}\choose{j}}p^j(1-p)^{n(k-1) - m - j}.
$$
Note that this is now in the same form as the probability of a given node having fewer than $t+1$ neighbors outside its set in ER graphs (given by \eqref{eq:single_node_ER}) with the only exception being that $n$ in \eqref{eq:single_node_ER} is replaced by $n(k-1)$.  Thus, the rest of the proof follows in the same manner as that proof (with the substitution of $n$ by $n(k-1)$), and thus the result for the random interdependent network follows by replacing $n$ by $n(k-1)$ in the probability $p(n)$ for ER graphs.  
\end{proof}

Comparing the edge probability for the random interdependent network in \eqref{eq:RIN_edge_probability} to that for the ER network in \eqref{eq:ER_edge_prob}, we see that the price paid for being agnostic about the subnetwork topology is an increase by a factor of $\frac{k}{k-1}$ in the edge formation probabilities in the former case (after scaling \eqref{eq:ER_edge_prob} to pertain to an ER graph on $nk$ nodes).  


\section{Opinion Dynamics in Random Graphs with Heterogeneous Degree Distributions}
\label{sec:hetero}
The graphs that we considered in the previous two sections were homogeneous, in the sense that each of the randomly chosen edges was placed with the same probability $p$ (although the random interdependent networks were allowed to have arbitrary topologies inside the subnetworks).  In this section, we discuss the extension of the results in the previous sections to random graphs with potentially different probabilities on each edge (e.g., as in the {\bf expected degree} random graph model \cite{chung2006complex}).

\begin{theorem}
Consider the undirected random graph where each edge $(v_i, v_j)$ is present independently with probability $p_{ij}(n)$.  Suppose that 
$$
p_{ij}(n) \ge \frac{\ln n + (r-1)\ln\ln n + c(n)}{n}
$$
for all $v_i, v_j \in \mathcal{V}$ (with $v_i \ne v_j$) and that $c(n) = o(\ln \ln n)$ with $c(n) \rightarrow \infty$.   Suppose the opinion threshold distribution has support $[0, \bar{r}]$ for some $\bar{r}\in \mathbb{N}$ and  satisfies
$$
q(t) = O\left(\frac{1}{(\ln n)^{t-r +1}}\right)
$$
for $t \ge r$. Then the resulting network facilitates consensus under the local-filtering opinion dynamics a.a.s.
\end{theorem}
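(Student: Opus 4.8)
The plan is to reduce the statement to $(\mathcal{T}+1)$-robustness via Theorem~\ref{thm:LF_dynamics_consensus}, and then to run the proof of Theorem~\ref{thm:ER_graph_filtering_thresholds} essentially verbatim, with the equality of edge probabilities in the ER case replaced by a stochastic-domination argument. By Theorem~\ref{thm:LF_dynamics_consensus}, it suffices to show that the random graph is $(\mathcal{T}+1)$-robust asymptotically almost surely, where $\mathcal{T} = (t_1, \dots, t_n)$ has i.i.d.\ entries drawn from $q(\cdot)$. As before, I would show that every $\mathcal{S} \subseteq \mathcal{V}$ with $|\mathcal{S}| \le \alpha n$ is $(\mathcal{T}+1)$-reachable a.a.s., for a suitable $\alpha = \alpha(n) \to 1$; since the smaller member of any disjoint pair of nonempty subsets has cardinality at most $n/2 \le \alpha n$, this establishes $(\mathcal{T}+1)$-robustness.

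The only genuinely new ingredient is the following per-node bound. Fix $\mathcal{S}$ with $|\mathcal{S}| = m$ and a node $v_i \in \mathcal{S}$, and write $\underline{p}(n) := \frac{\ln n + (r-1)\ln\ln n + c(n)}{n}$ for the common lower bound on the edge probabilities. The number of neighbors of $v_i$ lying outside $\mathcal{S}$ is a sum of $n-m$ independent Bernoulli random variables with parameters $p_{ij} \ge \underline p(n)$, so it stochastically dominates a $\mathrm{Binomial}(n-m, \underline p(n))$ variable; equivalently, the event that $v_i$ has fewer than $t_i+1$ neighbors outside $\mathcal{S}$ is decreasing in the edge set, so its probability only increases when every $p_{ij}$ is lowered to $\underline p(n)$. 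Hence, with $t_i \sim q(\cdot)$ independent of the edges,
\[
\Pr\bigl[\,|\mathcal{N}_i \setminus \mathcal{S}| \le t_i\,\bigr] \;\le\; \sum_{t=0}^{\bar r} q(t)\sum_{j=0}^{t}\binom{n-m}{j}\underline p(n)^j\bigl(1-\underline p(n)\bigr)^{n-m-j},
\]
which is exactly \eqref{eq:single_node_ER} with $p$ taken to be $\underline p(n)$. Moreover, for distinct nodes of $\mathcal{S}$ these events depend on disjoint collections of potential edges — those joining the node in question to $\mathcal{V}\setminus\mathcal{S}$ — together with that node's independent threshold, and are therefore mutually independent. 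Consequently the probability $\mathbb{P}_{\mathcal{S}}$ that every $v_i\in\mathcal{S}$ has fewer than $t_i+1$ external neighbors is at most the $m$-th power of the right-hand side above, i.e.\ it satisfies precisely the bound \eqref{eqn:P_S} with $p$ replaced by $\underline p(n)$.

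From here the argument is identical to that in the proof of Theorem~\ref{thm:ER_graph_filtering_thresholds}: since $\underline p(n)$ has exactly the functional form in \eqref{eq:ER_edge_prob} with the same $r$ and the same growth conditions on $c(n)$, and since the threshold condition \eqref{eq:ER_filtering_threshold} is imposed for this same $r$, the ratio estimate on the $b_j$, the reduction of the inner sum to its top term, the use of \eqref{eq:ER_filtering_threshold} to make $q(t)(\ln n + (r-1)\ln\ln n + c)^t/(\ln n)^{r-1} = O(1)$, the union bound over the $\binom{n}{m}\le (ne/m)^m$ sets of size $m$, the bound $e^{\underline p m}/m \le \max\{e^{\underline p},\frac{1}{\alpha n}e^{\alpha n \underline p}\}$, the choice of $\alpha$ with $\ln\ln n = o\bigl((1-\alpha)\ln n\bigr)$, and the summation of the resulting geometric series over $1 \le m \le \lfloor \alpha n\rfloor$ all carry over unchanged, and the final bound tends to $0$ because $c(n)\to\infty$. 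In particular the $m=1$ term already accounts for the event that some node is assigned a threshold at least as large as its degree, so that degenerate case needs no separate treatment. This also makes rigorous the remark in the footnote to the proof of Theorem~\ref{thm:ER_graph_filtering_thresholds}: any per-edge probabilities bounded below by the ER threshold still yield $(\mathcal{T}+1)$-robustness a.a.s., with the ER model recovered by taking every $p_{ij}$ equal to that threshold. I expect the only point requiring care to be the bookkeeping in the second paragraph — checking that lowering each $p_{ij}$ to $\underline p(n)$ moves the probability of the relevant decreasing events in the correct direction, and that the per-node events remain mutually independent across $\mathcal{S}$ so that the product bound \eqref{eqn:P_S} still holds; once this is in place, nothing downstream sees the heterogeneity of the edge probabilities.
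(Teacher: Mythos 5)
Your proof is correct, but it takes a genuinely different route from the paper's. The paper disposes of this theorem with a short global coupling argument: it constructs, edge by edge, an Erd\H os--R\'enyi graph $\mathcal{G}_1$ with edge probability $p(n)$ as in \eqref{eq:ER_edge_prob} that is by construction a subgraph of the heterogeneous graph $\mathcal{G}_2$, assigns both graphs the same thresholds, invokes Theorem~\ref{thm:ER_graph_filtering_thresholds} to conclude that $\mathcal{G}_1$ is $(\mathcal{T}+1)$-robust a.a.s., and then transfers the conclusion to $\mathcal{G}_2$ via the monotonicity of $(\mathcal{T}+1)$-robustness under edge addition. You instead push the comparison down to the per-node level: the number of external neighbors of a fixed node in $\mathcal{S}$ stochastically dominates a $\mathrm{Binomial}(n-m,\underline p(n))$ variable, so the single-node failure probability is bounded by \eqref{eq:single_node_ER} evaluated at $\underline p(n)$, and the remainder of the ER computation is repeated verbatim. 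Both arguments are sound; your independence claim (the per-node events for distinct nodes of $\mathcal{S}$ involve disjoint collections of potential edges into $\mathcal{V}\setminus\mathcal{S}$, together with independent thresholds) is exactly what already underlies the product bound \eqref{eqn:P_S} in the ER proof, and the lower bound $\underline p(n)$ has precisely the form \eqref{eq:ER_edge_prob} needed for the downstream estimates. What the paper's coupling buys is brevity and modularity --- Theorem~\ref{thm:ER_graph_filtering_thresholds} is used as a black box, and only the easily checked monotonicity of robustness must be observed. What your version buys is that it avoids constructing an explicit coupling, makes transparent that the heterogeneity of the $p_{ij}$ enters nowhere past the per-node bound, and directly yields the stronger intermediate statement that every set of size at most $\alpha n$ is $(\mathcal{T}+1)$-reachable a.a.s.\ in the heterogeneous model.
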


\begin{proof}
The proof follows a standard coupling argument, relying on the monotonicity of the $(\mathcal{T}+1)$-robustness property (i.e., adding additional edges to a $(\mathcal{T}+1)$-robust graph maintains that property). 

Let $p(n)$ be as in \eqref{eq:ER_edge_prob}, and note that $p_{ij}(n) \ge p(n)$ for all $v_i, v_j \in \mathcal{V}$.  We create two networks $\mathcal{G}_1$ and $\mathcal{G}_2$ as follows.  For each edge $(v_i,v_j)$, we flip a coin that lands heads with probability $p(n)$.  If the coin lands heads, we place the edge $(v_i,v_j)$ in both $\mathcal{G}_1$ and $\mathcal{G}_2$.  If the coin lands tails, we do not place the edge in $\mathcal{G}_1$.  We then flip another coin that lands heads with probability $\frac{p_{ij}(n)-p(n)}{1-p(n)}$.  If this coin lands heads, we place the edge in graph $\mathcal{G}_2$, and do not place the edge otherwise.  We do this for all edges in the two graphs.

It is easy to see that graph $\mathcal{G}_1$ is an Erd\H os-R\'enyi graph with edge probability $p(n)$, and that each edge in graph $\mathcal{G}_2$ appears with probability $p_{ij}(n)$.  Furthermore, $\mathcal{G}_1$ is a subgraph of $\mathcal{G}_2$.  Draw the filtering thresholds for graph $\mathcal{G}_1$ from the distribution $q(t)$, and let the filtering thresholds for graph $\mathcal{G}_2$ be the same as the ones in $\mathcal{G}_1$.  By Theorem~\ref{thm:ER_graph_filtering_thresholds}, graph $\mathcal{G}_1$ will be $(\mathcal{T}+1)$-robust a.a.s., and thus graph $\mathcal{G}_2$ will be $(\mathcal{T}+1)$-robust a.a.s.  This concludes the proof.
\end{proof}



\section{Summary and Future Work}
We studied a class of opinion dynamics where each node ignores the most extreme opinions in its neighborhood at each time-step.  We allowed each node to have a personal threshold for the number of neighbors that it ignores, and provided necessary and sufficient conditions on the network topology that guarantee consensus under such dynamics.  We then studied random graph models where each node has a random threshold that is drawn from a certain distribution.  We characterized properties of that distribution (in terms of the edge probabilities of the underlying network) that led to the network satisfying the required conditions for consensus.  Our analysis encompassed classical Erd\H os-R\'enyi networks, as well as networks with arbitrary community structure and networks with heterogeneous edge probabilities.  

There are a variety of interesting directions for future research, including a study of other classes of random graphs (and threshold distributions), along with tightness characterizations of the conditions that we have provided on the threshold distribution. 


\sloppy
\bibliographystyle{plainnat}
{\footnotesize \bibliography{refs}}

\end{document}